\RecustomVerbatimCommand{\VerbatimInput}{VerbatimInput}%
{fontsize=\scriptsize}
\newtheorem{theorem}{Theorem}
\newtheorem{lemma}[theorem]{Lemma}
\theoremstyle{definition}
\newtheorem{reduction}{Reduction Rule}
\newtheorem{reduction2}{Reduction Rule}
\newtheorem{branching}[reduction]{Branching Rule}
\newtheorem*{branching*}{Branching Rule}
\newtheorem*{subroutine*}{Subroutine}
\newtheorem{branching2}[reduction2]{Branching Rule}
\begin{document}

\newcommand{\Oh}{\ensuremath{\mathcal{O}}}
\newcommand{\Ohstar}{\ensuremath{\Oh^\ast}}
\newcommand{\fvsname}{\textsc{Feedback Vertex Set}}
\newcommand{\fvsshort}{\textsc{FVS}}
\newcommand{\disfvs}{\textsc{Disjoint-FVS}}
\newcommand{\zlota}{\phi}
\newcommand{\Vun}{U}
\newcommand{\Vdel}{D}
\newcommand{\pot}{\mu}
\newcommand{\sub}{\subseteq}
\newcommand{\sm}{\setminus}

\newcommand{\ccun}{\ell}
\newcommand{\ccdel}{\tau}
\newcommand{\tents}{t}

\newcommand{\defproblem}[4]{
  \vspace{2mm}
\noindent\fbox{
  \begin{minipage}{0.96\textwidth}
  \begin{tabular*}{0.96\textwidth}{@{\extracolsep{\fill}}lr} #1 & {\bf{Parameter:}} #3 \\ \end{tabular*}
  {\bf{Input:}} #2  \\
  {\bf{Question:}} #4
  \end{minipage}
  }
  \vspace{2mm}
}
\newcommand{\defnoparamproblem}[3]{
  \vspace{2mm}
\noindent\fbox{
  \begin{minipage}{0.96\textwidth}
  #1 \\
  {\bf{Input:}} #2  \\
  {\bf{Question:}} #3
  \end{minipage}
  }
  \vspace{2mm}
}

\title{Faster deterministic \textsc{Feedback Vertex Set}\thanks{Partially supported by NCN grant UMO-2012/05/D/ST6/03214 and Foundation for Polish Science.}}

\author{
  Tomasz Kociumaka\thanks{Institute of Informatics, University of Warsaw, Poland, \texttt{kociumaka@mimuw.edu.pl}}
  \and
  Marcin Pilipczuk\thanks{Institute of Informatics, University of Warsaw, Poland, \texttt{malcin@mimuw.edu.pl}}
}

\date{}

\maketitle

\begin{abstract}
We present two new deterministic  algorithms
for the \fvsname{} problem parameterized by the solution size. 
We begin with a simple algorithm, which runs in $\Ohstar((2+\zlota)^k)$ time,
where $\zlota<1.619$ is the golden ratio.  It already surpasses
the previously fastest $\Ohstar((1+2\sqrt{2})^k)$-time deterministic algorithm due to Cao et al.~[SWAT 2010].
In our developments we follow the approach of Cao et al., however, thanks to a
new reduction rule, we obtain not only better dependency on the parameter in the running
time, but also a solution with simple analysis and only a single branching rule.
Then, we present a modification of the algorithm which, using a more involved
set of branching rules, achieves $\Ohstar(3.592^k)$ running time.
\end{abstract}

\section{Introduction}

The \fvsname{} problem (\fvsshort{} for short), where we ask to delete
as few vertices as possible
from a given undirected graph to make it acyclic,
is one of the fundamental graph problems, appearing on the Karp's
list of 21 NP-hard problems \cite{karp}.
Little surprise it is also one of the most-studied problems in parameterized complexity,
and a long race for the fastest FPT algorithm (parameterized by the solution size, denoted $k$)
includes~\cite{fvs1,fvs2,fvs3,fvs5,fvs6,fvs7,guo:fvs,fvs:5k,fvs:3.83k,fvs:4krand,fvs:3k}.
Prior to this work, the fastest {\em{deterministic}} algorithm, due to Cao et al.~\cite{fvs:3.83k}, runs in $\Ohstar((1+2\sqrt{2})^k) \leq \Ohstar(3.83^k)$ time\footnote{The $\Ohstar$-notation suppresses factors polynomial in the input size.}; if we allow randomization,
the Cut\&Count technique yields an $\Ohstar(3^k$)-time algorithm~\cite{fvs:3k}.
Further research investigates
kernelization complexity of \fvsshort{} \cite{fvs:kernel1,fvs:kernel2,fvs:quadratic-kernel} and some generalizations e.g. to directed graphs~\cite{dfvs,sfvs,dsfvs}.

In this work we claim the lead in the `FPT race' for the fastest {\em{deterministic}}
algorithm for \fvsshort{}.

\begin{theorem}\label{thm:main-fast}
\fvsname{}, parameterized by the solution size $k$,
can be solved in $\Ohstar(3.592^k)$ time
and polynomial space.
\end{theorem}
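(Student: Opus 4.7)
The plan is to follow the now-standard iterative compression framework. First, reduce \fvsname{} to the \disfvs{} variant---given a graph $G$ together with a feedback vertex set $\Vun$, find a feedback vertex set of size at most $k$ disjoint from $\Vun$---at the cost of a polynomial factor per iteration and preserving the base of the exponent in $k$.

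For the recursion on \disfvs{} I would maintain, during search, a partition of $V(G)\sm\Vun$ into a set $\Vdel$ of vertices already committed to the deletion set and a set of undecided vertices. Exhaustively apply polynomial-time reduction rules: eliminate low-degree vertices, force into $\Vdel$ any undecided vertex with two neighbours in the same tree of $G[\Vun]$, and apply the new reduction rule hinted at in the abstract, which likely cleans up small-cycle configurations running through $\Vun$ such as ``tents''. Then define a potential $\pot = k + \alpha\cdot\ccun$, where $\ccun$ counts connected components of $G[\Vun]$ and $\alpha\in(0,1)$ is a constant to be tuned; the critical property is that moving an undecided vertex $v$ into $\Vun$ drops $\pot$ by more than $\alpha$ whenever $v$ merges several components of $G[\Vun]$.

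The branching step picks an undecided vertex $v$ of maximum structural priority (say, most $\Vun$-neighbours, ties broken by degree outside $\Vun$) and recurses on two options: either $v$ joins $\Vdel$, in which case $\pot$ drops by $1$; or $v$ joins $\Vun$ and merges $d$ components of $G[\Vun]$, in which case $\pot$ drops by roughly $\alpha(d-1)$. For $d\ge 2$ and a well-chosen $\alpha$ this branching vector is already enough to beat $(2+\zlota)^k$; the simple version of the algorithm presumably handles the residual cases with a single extra rule. To push the base all the way down to $3.592$, one needs more elaborate branching rules handling the bad configurations---undecided vertices with at most one $\Vun$-neighbour, or many $\Vun$-neighbours concentrated in one tree---typically by branching jointly on $v$ and a carefully chosen second vertex, or on an entire local pattern at once.

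The main technical obstacle is the accompanying case analysis. After all reductions one must enumerate the possible local structures around a chosen branching vertex, design a compound branching rule for each, and verify that every branching vector $(\Delta_1,\ldots,\Delta_m)$ satisfies $\sum_{i=1}^m 3.592^{-\Delta_i}\le 1$ measured in the potential $\pot$. The bound $3.592$ is tight enough that this forces a rather fine-grained split of cases together with a careful tuning of $\alpha$; this case analysis, rather than any single clever idea, is where I expect the bulk of the proof effort to go.
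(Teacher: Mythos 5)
Your framework (iterative compression to \disfvs{}, a potential combining $k$ with $\alpha$ times the number of components of $G[\Vun]$, and a two-way branch whose \emph{fix} branch gains $\alpha(d-1)$) matches the paper's skeleton, but the proposal is missing the two load-bearing ingredients, and without them the analysis does not close. First, the paper's measure is $\mu_\alpha = k + \alpha\ell' - t$, where $t$ counts \emph{tents} ($\Vdel$-isolated vertices of degree $3$); the $-t$ term is essential because the costly configurations you flag as ``bad'' (deletable vertices with few $\Vun$-neighbours) are not resolved by joint branching on pairs of vertices in general, but by reduction and branching steps that \emph{create} tents and are thereby paid for. Making the measure meaningful then requires the separate lemma that an instance with $t \geq k + \tfrac{1}{2}\ell$ is a NO-instance, which is what keeps $\mu_\alpha$ bounded below. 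Your potential $k+\alpha\ccun$ has no such term, and with it the branching vectors for low-$\Vun$-degree vertices simply do not yield any base close to $3.592$. Second, you omit the base case entirely: once every vertex of $\Vdel$ is a tent, the instance is solved in polynomial time by a reduction to matroid parity in a graphic (in the paper's reformulation) matroid. This is the ``core observation'' the whole scheme leans on --- tents are never branched on --- and no amount of case analysis on branching vectors substitutes for it. You also mischaracterize the new reduction rule: it does not ``clean up'' tents but manufactures them, by subdividing the edge from a $\Vdel$-leaf of $\Vun$-degree $2$ to its unique $\Vdel$-neighbour, which is precisely the move that turns the most expensive branch of earlier algorithms into a measure decrease.

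For the improvement from $(2+\zlota)^k$ to $3.592^k$ specifically, the paper does not merely refine the case split: it roots each tree of $G[\Vdel]$, classifies vertices into tents, singles, doubles and standard vertices, branches on a ``guide'' whose subtree has a controlled shape, and composes rules that invoke one another as subroutines, with the resulting long branching vectors verified by a script and $\alpha$ tuned to $0.84$. Your sketch gestures at ``branching on an entire local pattern'' but gives no structure that would let one enumerate the patterns or certify that every vector satisfies the recurrence at base $3.592$; as written, the proposal establishes neither the simple bound nor the improved one.
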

\noindent
First, we present much simpler algorithm which proves a slightly weaker result.
\begin{theorem}\label{thm:main}
\fvsname{}, parameterized by the solution size $k$,
can be solved in $\Ohstar((2+\zlota)^k) \leq \Ohstar(3.619^k)$ time
and polynomial space where $\zlota = \frac{1 + \sqrt{5}}{2} < 1.619$ is the golden ratio.
\end{theorem}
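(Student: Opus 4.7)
I would reduce \fvsshort{} to its disjoint variant \disfvs{} via the classical iterative compression scheme: build the graph vertex by vertex, maintain a feedback vertex set $Z$ of size at most $k+1$, guess the intersection of $Z$ with the sought solution, and search for a solution disjoint from the part to be kept. The resulting overhead of $\Ohstar(2^k)$ is subsumed by the target running time, so it remains to solve \disfvs{} in $\Ohstar((2+\zlota)^k)$. Here the input is a graph $G$ with $V(G) = \Vun \uplus \Vdel$ where $G[\Vdel]$ is a forest, and the task is to find $S \subseteq \Vun$ with $|S| \leq k$ whose removal destroys every cycle.

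For \disfvs{} I would measure progress by a potential $\pot = k + \alpha \cdot \ccun$, where $\ccun$ counts the ``active'' trees of $G[\Vdel]$ (those with at least one neighbour in $\Vun$) and $\alpha \in (0,1)$ is a parameter to be tuned. Iterative compression ensures $|\Vdel| \leq k+1$, so $\pot \leq (1+\alpha) k + O(1)$ at the start; an $\Ohstar(c^\pot)$ branching algorithm then yields an $\Ohstar((c^{1+\alpha})^k)$ bound for \fvsshort{}, and the golden ratio will fall out of the Fibonacci-type recurrence once $\alpha$ and $c$ are chosen to make the branching analysis tight. Before any branching I would exhaustively apply a battery of polynomial-time rules: remove self-loops (forcing the incident vertex into $S$), delete degree-at-most-one vertices, short-circuit vertices of $\Vun$ of degree $2$, and move into $\Vdel$ any vertex of $\Vun$ whose transfer closes no cycle. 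On top of these I would invoke the paper's new reduction rule, whose role is to eliminate precisely the corner configurations that would otherwise force Cao et al.\ into several different branching rules.

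After exhaustive reduction, the single branching rule selects a vertex $v \in \Vun$ with at least two neighbours in one active tree $T$ of $G[\Vdel]$, and recurses on two subinstances: (i) add $v$ to $S$, which decreases $k$ by $1$; or (ii) move $v$ into $\Vdel$, which closes a cycle in $G[\Vdel \cup \{v\}]$ and triggers further reduction rules that must either push another vertex into $S$ or substantially collapse $\ccun$. The main obstacle is branch (ii): I have to show that in every configuration surviving the reductions the combined drop in $k$ and $\ccun$ is at least the amount required by the target recurrence. This is exactly where the new reduction rule earns its keep, by ruling out the pathological active trees on which branch (ii) would make too little progress. Once the uniform worst-case progress in branch (ii) is verified, a standard search-tree analysis converts the two-term recurrence in $\pot$ back to $k$ via $\pot \leq (1+\alpha) k$, delivering the claimed $\Ohstar((2+\zlota)^k)$ bound.
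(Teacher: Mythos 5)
Your high-level architecture matches the paper's --- iterative compression down to \disfvs{}, a measure mixing $k$ with the number of components of the undeletable forest, reduction rules, and a single branching rule --- but three essential pieces are missing or wrong. First, the branching rule you describe is not a branching rule at all: a deletable vertex with two neighbours in the \emph{same} tree of the undeletable forest closes a cycle that no permitted deletion can destroy, so it must go into the solution; this is the paper's Reduction Rule~2, and it makes your branch (ii) vacuous. The actual branch is on a deletable vertex $v$ with at least \emph{three} undeletable neighbours, all in \emph{distinct} components (guaranteed once Rule~2 is exhausted): deleting $v$ drops $k$ by $1$, while fixing $v$ merges three components and drops $\ccun$ by $2$. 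The branching vector $(1,2)$ is where the golden ratio comes from, via $\zlota^{-1}+\zlota^{-2}=1$; your proposal never exhibits this recurrence. Second, your measure $k+\alpha\ccun$ omits the tents term $-\tents$. That term is what makes the paper's new reduction rule (subdividing the edge between a deletable vertex of undeletable-degree $2$ and its unique deletable neighbour) measure-non-increasing --- the rule creates a fresh undeletable component, so without the compensating new tent it would \emph{increase} your potential --- and it also drives the termination argument: the paper shows (via matroid parity in a graphic matroid) that instances in which every deletable vertex is a tent are polynomial-time solvable, and a counting lemma bounds $\tents$ on YES-instances so that the measure stays positive until that case triggers. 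Without the all-tents base case you cannot guarantee that a non-tent branching vertex with three undeletable neighbours always exists.

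Third, the running-time accounting for iterative compression is off. The $2^{k+1}$ guesses of $Y\subseteq Z$ are not ``subsumed'': multiplying $2^k$ by a uniform cost for the disjoint subproblem would give roughly $(2\zlota^2)^k\approx 5.24^k$. The point is that a guess $Y$ leaves budget $k-|Y|$ and at most $k+1-|Y|$ undeletable components, so the residual instance has measure at most $2(k-|Y|)$ and costs $\Ohstar(\zlota^{2(k-|Y|)})$; the sum over all $Y$ then telescopes to $\Ohstar\left((1+\zlota^2)^k\right)=\Ohstar\left((2+\zlota)^k\right)$ using $\zlota^2=1+\zlota$. This trade-off between the size of the guess and the measure of the residual instance is exactly where the constant $2+\zlota$ is produced, and it is absent from your plan.
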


In our developments, we closely follow the approach of 
the previously fastest algorithm due to Cao et al.~\cite{fvs:3.83k}.
That is, we first employ the iterative compression principle~\cite{reed:ic}
in a standard manner to reduce the problem to the disjoint compression variant (\disfvs{}),
where the vertex set is split into two parts, both inducing forests, and we are
allowed to delete vertices only from the second part.
Then we develop a set of reduction and branching rule(s) to cope with this
structuralized instance. We rely on the core observation of \cite{fvs:3.83k}
that the problem becomes polynomial-time solvable once the maximum degree of
the deletable vertices drops to $3$.

The main difference between our algorithm and the one of \cite{fvs:3.83k}
is the introduction of a new reduction rule
that reduces deletable vertices with exactly one deletable neighbour and two undeletable ones.
Branching on such vertices is the most costly operation
in the $\Ohstar(5^k)$ algorithm of Chen et al.~\cite{fvs:5k}
and avoiding such branching is a source of some
complications in the algorithm of \cite{fvs:3.83k}.
Introducing the new rule allows us to perform later only
a single straightforward branching rule.
Hence, the new rule not only leads to a better time complexity, but also
allows us to simplify the algorithm and analysis, comparing to \cite{fvs:3.83k}.

Additionally, we present a new, shorter proof of the main technical contribution
of \cite{fvs:3.83k}, asserting that \disfvs{} is polynomial-time solvable if
all deletable vertices are of degree at most $3$.
Thus, apart from better time complexity,
we contribute a simplification of the arguments of Cao et al.~\cite{fvs:3.83k}.

Finally, we modify our algorithm so that it achieves slightly better time
complexity. This requires
changes in the instance measure as well as introducing several branching rules.
The branching rules use each other as subroutines, which makes their branching
vectors long. Instead of manually determining this branching vectors, we provide
a script, which contains a compact and easily
readable representation of the branching rules and based on this data
determines the complexity of our algorithm.

\subsection{Preliminaries and notation}

All graphs in our work are undirected and, unless explicitly specified, simple.
For a graph $G$, by $V(G)$ and $E(G)$ we denote its vertex- and edge-set, respectively.
For $v \in V(G)$, the neighbourhood of $v$, denoted $N_G(v)$, is
defined as $N_G(v) = \{u \in V(G): uv \in E(G)\}$.
For a set $X \subseteq V(G)$, we denote the subgraph induced by $X$ as $G[X]$
and $G \setminus X$ denotes $G[V(G) \setminus X]$.
For $X\sub V(G)$ and $v\in V(G)$ we define an $X$-degree of $v$, denoted by
$\deg_X(v)$, as $|N_G(v) \cap X|$. Moreover, for $v\in X$ 
we say that $v$ is $X$-isolated if $\deg_X(v)=0$, and an $X$-leaf if $\deg_X(v)=1$.

If $uv$ is an edge in a (multi)graph $G$,
by {\em{contracting the edge $uv$}} we mean the following
operation: we replace $u$ and $v$ with a new vertex $x_{uv}$,
introduce $p-1$ loops at $x_{uv}$, where $p$ is the multiplicity of $uv$ in $G$,
and, for each $w \in (N_G(u) \cup N_G(v)) \setminus \{u,v\}$,
introduce an edge $wx_{uv}$ of multiplicity equal 
to the sum of the multiplicities of $wu$ and $wv$ in $G$.
In other words, we do not suppress multiple edges and loops in the process of contraction.
Note that, if $G$ is a simple graph and $N_G(u) \cap N_G(v) = \emptyset$,
no loop nor multiple edge is introduced when contracting $uv$.

\section{The simple algorithm}

\subsection{Iterative compression}

Following \cite{fvs:3.83k}, we employ the iterative compression principle~\cite{reed:ic}
in a standard manner.
Consider the following variant of \fvsshort{}.

\defnoparamproblem{\disfvs{}}{Graph $G$, a partition $V(G) = \Vun \cup \Vdel$
such that both $G[\Vun]$ and $G[\Vdel]$ are forests,
and an integer $k$.}{Does there exist a set $X \subseteq \Vdel$ of size at most $k$
such that $G \setminus X$ is a forest?}

A $\Vdel$-isolated vertex of degree $3$ is called a {\em{tent}}.
For a \disfvs{} instance $I=(G,\Vun,\Vdel,k)$ we define the following invariants:
$k(I) = k$,
$\ccun(I)$ is the number of connected components of $G[\Vun]$,
$\tents(I)$ is the number of tents in $I$,
and $\pot(I) = k(I) + \ccun(I) - \tents(I)$ is the {\em{measure}} of $I$.
Note that our measure differs from the one used in~\cite{fvs:3.83k}.
We omit the argument if the instance is clear from the context.

In the rest of the paper we focus on solving \disfvs{}, proving the following
theorem.
\begin{theorem}\label{thm:dismain}
\disfvs{} on an instance $I$
can be solved in $\Ohstar(\zlota^{\max(0,\pot(I))})$ time and polynomial space.
\end{theorem}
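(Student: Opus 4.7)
The plan is to prove Theorem~\ref{thm:dismain} by induction on $\pot(I)$, introducing a collection of reduction rules together with a single branching rule. Each reduction will preserve the answer and not increase $\pot$; the branching rule will be shown to have branching vector at least $(1,2)$, yielding the recurrence $T(\pot)\leq T(\pot-1)+T(\pot-2)$, whose solution $T(\pot)=\Oh(\zlota^\pot)$ matches the claimed bound. When none of the rules apply, every deletable vertex has total degree at most three and we invoke the polynomial-time subroutine for bounded-degree \disfvs{}; as announced in the introduction, we also include our own streamlined proof of this subroutine.

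The reductions will come in two flavours. The standard cleanup family rejects the instance whenever $k<0$ or $G[\Vun]$ already contains a cycle, forces into the solution any $v\in\Vdel$ whose $\Vun$-neighbours lie in the same connected component of $G[\Vun]$, moves into $\Vun$ any $v\in\Vdel$ of total degree at most one, and (if needed) suppresses long $\Vdel$-paths of low-degree vertices. The new reduction rule of the present paper handles a deletable vertex $v$ with $\deg_{\Vdel}(v)=1$ and $\deg_{\Vun}(v)=2$ whose two $\Vun$-neighbours lie in distinct components of $G[\Vun]$: a short case analysis shows that $v$ may be contracted with its unique $\Vdel$-neighbour (or replaced by a suitable local gadget) to produce an equivalent instance of measure at most $\pot(I)$. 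Eliminating the costly branching on such configurations, which was a source of complication in earlier algorithms~\cite{fvs:5k,fvs:3.83k}, is precisely what enables the use of a single branching rule in the sequel.

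Once no reduction fires, pick any $v\in\Vdel$ which is not a tent and whose degree exceeds three. The reductions will have guaranteed that in the \emph{keep} branch the decrease in $\ccun$ plus the increase in $\tents$ is at least two: the $\Vun$-neighbours of $v$ lie in pairwise distinct components of $G[\Vun]$, so merging $v$ into $\Vun$ drops $\ccun$ by $\deg_{\Vun}(v)-1$, while any $\Vdel$-neighbour of $v$ that was a degree-three vertex with $v$ as its only $\Vdel$-neighbour becomes a fresh tent, bumping $\tents$ upwards. In the \emph{delete} branch, $k$ drops by one. Summing up, $\pot$ drops by at least $1$ in the first branch and at least $2$ in the second, yielding the desired branching vector. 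When no such $v$ remains, every deletable vertex has degree at most three and the instance is handed over to the polynomial-time subroutine.

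The main obstacle, as I see it, will be the analysis of the new reduction rule: one must simultaneously argue that the surgery preserves the answer and keeps $\pot$ non-increasing, and the interplay between $k$, $\ccun$ and $\tents$ is delicate, as the contraction may shift $k$ by an additive constant, may merge two components of $G[\Vun]$, and may create or destroy tents at nearby vertices. Once this lemma is in hand, verifying the branching vector for the single branching rule and reducing the base case to the bounded-degree subroutine are routine; the polynomial-space guarantee follows from the depth-first structure of the recursion, and the bound $\Ohstar(\zlota^{\max(0,\pot(I))})$ drops out of the standard branching-factor calculation.
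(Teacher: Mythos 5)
Your overall architecture matches the paper's: reduction rules that never increase $\pot$, a single branch with vector $(1,2)$, and a polynomial-time base case. However, three concrete gaps remain. First, the new reduction rule is not what you describe, and your version does not obviously work: contracting $v$ with its unique $\Vdel$-neighbour $w$ conflates ``delete $v$'' with ``delete $w$'' (a solution may want exactly one of them), and you explicitly defer the safeness and measure analysis of the unspecified ``gadget''. The paper's rule is far simpler and is the key idea you are missing: \emph{subdivide} the edge $vw$ and put the new degree-$2$ vertex into $\Vun$. This is trivially safe (a degree-$2$ undeletable vertex in the middle of an edge changes nothing), and although $\ccun$ grows by one, $v$ becomes a tent, so $\tents$ grows by at least one and $\pot$ does not increase. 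Second, your branching analysis does not establish the $(1,2)$ vector. You pick a non-tent $v$ of ``degree exceeding three'', but a vertex of total degree $4$ may have only two $\Vun$-neighbours, in which case $\ccun$ drops by only one in the keep branch, and the creation of new tents among its $\Vdel$-neighbours is not guaranteed. The paper instead picks a non-tent $v$ maximizing $\deg_{\Vun}(v)$ and proves $\deg_{\Vun}(v)\ge 3$ by a chain of inapplicability arguments: some component of $G[\Vdel]$ contains a non-tent, that component has a vertex of $\Vdel$-degree at most one, and since the degree-$\le 1$, degree-$2$ and subdivision rules do not apply, that vertex has at least three $\Vun$-neighbours, necessarily in distinct components of $G[\Vun]$.

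Third, nothing in your proposal bounds the recursion when $\pot$ reaches $0$; rejecting when $k<0$ does not suffice because $\tents$ can make $\pot$ negative while $k$ is still large, and without a stopping rule the bound $\Ohstar(\zlota^{\max(0,\pot(I))})$ is vacuous. The paper proves that if $\tents(I)\ge k(I)+\tfrac{1}{2}\ccun(I)$ then $I$ is a NO-instance (each surviving tent joins three components of $G[\Vun]$, so $2|T\setminus X|<\ccun$), and turns this into a reduction rule that rejects whenever $\pot(I)\le \tfrac{1}{2}\ccun(I)$, in particular whenever $\pot(I)\le 0$. You also leave the polynomial-time base case as a promissory note; the paper reduces the all-tents case to matroid parity in a graphic matroid of a contraction of $G$, which is a self-contained argument you would still need to supply.
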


For sake of completeness, we show how Theorem \ref{thm:dismain}
implies Theorem \ref{thm:main}.

\begin{proof}[Proof of Theorem \ref{thm:main}]
Assume we are given a \fvsshort{} instance $(G,k)$.
Let $v_1,\ldots,v_n$ be an arbitrary ordering of $V(G)$.
Define $V_i = \{v_1,v_2,\ldots,v_i\}$, $G_i = G[V_i]$; we iteratively solve \fvsshort{}
instances $(G_i,k)$ for $i=1,2,\ldots,n$.
Clearly, if $(G_i,k)$ turns out to be a NO-instance for some $i$,
$(G,k)$ is a NO-instance as well.
On the other hand, $(G_i,k)$ is a trivial YES-instance for $i \leq k+1$.

To finish the proof we need to show how, given a solution $X_{i-1}$ to $(G_{i-1},k)$,
solve the instance $(G_i,k)$. Let $Z := X_{i-1} \cup \{v_i\}$ and
$\Vdel = V_i \setminus Z = V_{i-1} \setminus X_{i-1}$. Clearly,
$G[\Vdel]$ is a forest.
We branch into $2^{|Z|} \leq 2^{k+1}$ subcases, guessing the intersection
of the solution to $(G_i,k)$ with the set $Z$.
In a branch labeled $Y \subseteq Z$, we delete $Y$ from $G_i$ and disallow
deleting vertices of $Z \setminus Y$.
More formally, for any $Y \subseteq Z$ such that $G_i[Z \setminus Y]$ is a forest,
we define $\Vun = Z \setminus Y$ and apply the algorithm of Theorem \ref{thm:dismain}
to the \disfvs{} instance $I_Y = (G_i \setminus Y, \Vun, \Vdel, k-|Y|)$. 
Clearly, $I_Y$ is a YES-instance to \disfvs{} iff
$(G_i,k)$ has a solution $X_i$ with $X_i \cap Z = Y$.

As for the running time, note that $\ccun(I_Y) < |\Vun| = |Z \setminus Y| \leq (k+1)-|Y|$.
Hence, $\pot(I_Y) \leq 2(k-|Y|)$ and the total running time of solving
$(G_i,k)$ is bounded by
$$\Ohstar\left(\sum_{Y \subseteq Z} \zlota^{2(k-|Y|)}\right) = \Ohstar\left((1+\zlota^2)^k\right) = \Ohstar\left((2+\zlota)^k\right).$$
This finishes the proof of Theorem \ref{thm:main}.
\end{proof}

\subsection{Reduction rules}\label{ss:redPhi}

Assume we are given na \disfvs{} instance $I = (G,\Vun,\Vdel,k)$.
We first recall the (slightly modified) reduction rules of~\cite{fvs:3.83k}.
At any time, we apply the lowest-numbered applicable rule.

\begin{reduction}\label{red:01}
Remove all vertices of degree at most $1$ from $G$.
\end{reduction}

\begin{reduction}\label{red:two-nei}
If a vertex $v \in \Vdel$ has at least two neighbours in the same
connected component of $G[\Vun]$, delete $v$ and decrease $k$ by one.
\end{reduction}

\begin{reduction}\label{red:deg2}
If there exists a vertex $v \in \Vdel$ of degree $2$ in $G$,
move it to $\Vun$ if it has a neighbour in $\Vun$,
and contract one of its incident edges otherwise.
\end{reduction}

We shortly discuss the differences in the statements between our work
and~\cite{fvs:3.83k}. We say that a rule is {\em{safe}}
if the output instance is a YES-instance iff the input one is.
First, we apply Rule \ref{red:two-nei}
to any vertex with many neighbours in the same connected component of $G[\Vun]$,
not only to the degree-2 ones; however, the safeness of the new rule
is straightforward, as any solution to \disfvs{} on $I$ needs to contain such $v$.
Second, we prefer to contract an edge in Reduction \ref{red:deg2}
in case when $N_G(v) \subseteq \Vdel$, instead of moving $v$ to $\Vun$,
to avoid an increase of the measure $\pot(I)$. Note that,
as $G$ is simple and $G[\Vdel]$ is a forest, no multiple edge is 
introduced in such contraction and $G$ remains a simple graph.

By \cite{fvs:3.83k} and the argumentation above we infer that
Rules~\ref{red:01}--\ref{red:deg2} are safe
and applicable in polynomial time.
Let us now verify the following.
\begin{lemma}\label{lem:red-measure}
An application of any of the Rules \ref{red:01}--\ref{red:deg2}
does not increase $\pot(I)$.
\end{lemma}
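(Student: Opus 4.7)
The plan is, for each of the three rules, to track the changes $\Delta k$, $\Delta\ccun$ (the change in the number of connected components of $G[\Vun]$), and $\Delta\tents$ (the change in the number of tents), and to verify $\Delta k + \Delta\ccun - \Delta\tents \le 0$. Two structural observations will drive and simplify the analysis: (a) a tent lies in $\Vdel$ with all its neighbours in $\Vun$, so no tent is adjacent to any vertex of $\Vdel$; and (b) any rule that modifies only vertices of $\Vdel$ leaves $G[\Vun]$ unchanged, hence contributes $\Delta\ccun = 0$. The main subtlety will be that a single local modification can simultaneously decrement $\ccun$ and create a new tent, so I will need to check case by case that every newly created tent is absorbed by a corresponding decrement of $\ccun$ or $k$.

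I would start with Rule~\ref{red:two-nei}, where the slack $\Delta k = -1$ makes life easy. Observation (b) gives $\Delta\ccun = 0$, so it suffices to show $\Delta\tents \ge -1$. By (a), no tent other than $v$ itself is adjacent to $v$, hence deleting $v$ destroys at most one tent. New tents can only arise at $\Vdel$-neighbours of $v$ (specifically, at a degree-$4$ vertex whose only $\Vdel$-neighbour was $v$), and such creations only help the bound. Hence $\Delta\pot \le 0$.

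For Rule~\ref{red:deg2}, I split along the two subcases prescribed by the rule. If $v$ is moved to $\Vun$ and has both neighbours in $\Vun$, these must lie in distinct components of $G[\Vun]$ (otherwise Rule~\ref{red:two-nei} would have applied), so the move merges them and gives $\Delta\ccun = -1$; if $v$ has one neighbour in each side, $\Delta\ccun = 0$. In both scenarios $\Delta\tents \ge 0$: the vertex $v$ has degree $2$ and so is not itself a tent, no existing tent is adjacent to $v \in \Vdel$ by (a), and the only possible tent-change is the creation of a new tent at $v$'s $\Vdel$-neighbour when it becomes $\Vdel$-isolated. In the contraction subcase, $v$'s two $\Vdel$-neighbours $w, w'$ are non-adjacent (otherwise $\{v, w, w'\}$ would form a triangle in the forest $G[\Vdel]$), so contracting $vw$ introduces no loops or multi-edges; the merged vertex stays in $\Vdel$ and retains $w'$ as a $\Vdel$-neighbour, so it is not a tent, while $G[\Vun]$ is untouched, giving $\Delta\pot = 0$.

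Finally, for Rule~\ref{red:01} with $\Delta k = 0$, I would split on the side of $v$ and the location of its at-most-one neighbour. If $v \in \Vdel$, then $\Delta\ccun = 0$ by (b) and the only possible tent-change is a creation at $v$'s neighbour when it too lies in $\Vdel$, giving $\Delta\tents \ge 0$. If $v \in \Vun$, then $\Delta\ccun \in \{-1, 0\}$, with $-1$ occurring exactly when $v$ has no $\Vun$-neighbour. When $\Delta\ccun = 0$, $v$'s unique neighbour lies in $\Vun$ and no $\Vdel$-vertex is affected, so $\Delta\tents = 0$. The interesting case is $v \in \Vun$ with its unique neighbour $u \in \Vdel$: here $\Delta\ccun = -1$ and only the tent status of $u$ can change (so $\Delta\tents \ge -1$), which the decrement in $\ccun$ absorbs. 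The delicate part of the whole lemma is exactly this bookkeeping: one must verify in each case that any tent-creation condition coincides with (or is dominated by) a drop in $\ccun$ (or in $k$, for Rule~\ref{red:two-nei}), and observation~(a) is what prevents destroyed tents from piling up.
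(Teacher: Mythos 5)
Your proposal is correct and follows essentially the same route as the paper: a case analysis per rule tracking the changes in $k$, $\ccun$ and $\tents$, with the key point in Rule~\ref{red:01} that a destroyed tent (when the removed $\Vun$-vertex was the sole undeletable neighbour of a tent) is compensated by a drop in $\ccun$, and in Rule~\ref{red:deg2} that the contraction subcase avoids any increase of $\ccun$. Your explicit observations (a) and (b) are implicit in the paper's shorter argument; no substantive difference.
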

\begin{proof}
Consider first an application of Rule \ref{red:01} to a vertex $v$.
If $v \in \Vdel$, $k(I)$, $\ccun(I)$ and $\tents(I)$ remains unchanged, as $v$
is not a tent.
If $v \in \Vun$, $k(I)$ remains unchanged and $\ccun(I)$ does not increase.
Note that $\tents(I)$ may decrease if the sole neighbour of $v$ is a tent. However,
in this case $\ccun(I)$ also drops by one, and $\pot(I)$ remains unchanged.

If Rule \ref{red:two-nei} is applied to $v$, $k(I)$ drops by one, $\ccun(I)$
remains unchanged and $\tents(I)$ remains the same or drops by one, depending
on whether $v$ is a tent or not.

If Rule \ref{red:deg2} is applied to $v$, $k(I)$ remains unchanged,
$\tents(I)$ does not decrease and $\ccun(I)$ does not increase
(thanks to the special case of $N_G(v) \subseteq \Vdel$).
\end{proof}

We now prove the lower bound on the measure $\pot(I)$ (cf. \cite{fvs:3.83k}, Lemma 1):
\begin{lemma}\label{lem:pot-bound}
Let $I$ be a \disfvs{} instance.
If $\tents(I) \geq k(I) + \frac{1}{2}\ccun(I)$, 
then $I$ is a NO-instance.
\end{lemma}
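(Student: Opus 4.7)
The plan is a proof by contradiction: I would assume there is a solution $X \sub \Vdel$ with $|X| \le k(I)$ so that $G' := G \sm X$ is a forest, and then bound how many tents can survive outside $X$. Since at most $|X| \le k(I)$ tents can lie inside $X$, it suffices to prove that the set $T$ of tents not in $X$ satisfies $|T| < \tfrac12\ccun(I)$; together with $\tents(I) = |T| + |\{\text{tents}\} \cap X|$ this contradicts the hypothesis.

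To bound $|T|$, I would build an auxiliary bipartite graph $H$ whose vertex classes are the set $\mathcal{C}$ of connected components of $G[\Vun]$ and the set $T$, with an edge between $t \in T$ and $C \in \mathcal{C}$ whenever $t$ has a neighbour in $C$. The first observation is that for any $t \in T$ its three $\Vun$-neighbours must lie in three \emph{distinct} components: otherwise two of them, say $a,b$, lie in a common component $C$, a path from $a$ to $b$ exists inside the tree $G[C] \sub G'$, and closing this path through $t$ produces a cycle in the forest $G'$. Hence every $t \in T$ has degree exactly $3$ in $H$, giving $|E(H)| = 3|T|$.

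The main step is to show that $H$ itself is a forest. Suppose $H$ contained a cycle alternating $t_1, C_1, t_2, C_2, \ldots, t_r, C_r, t_1$. For each $C_j$ the two incident tents $t_j, t_{j+1}$ have some neighbours $u_j, w_j \in C_j$, and because $G[C_j]$ is a subtree of the forest $G[\Vun]$ there is a (possibly trivial) path in $G[C_j]$ from $u_j$ to $w_j$. Concatenating these paths with the tents produces a closed walk in $G'$ in which the tents $t_1, \dots, t_r$ are pairwise distinct and the internal paths live in pairwise disjoint components, so the walk contains a genuine cycle — contradicting that $G'$ is a forest. I expect this lifting argument to be the only delicate point, and the care it needs is essentially just checking that the walk is non-degenerate.

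Once $H$ is a forest, the routine bound $|E(H)| \le |V(H)| - 1$ reads $3|T| \le \ccun(I) + |T| - 1$, i.e.\ $|T| \le \frac{\ccun(I)-1}{2} < \tfrac12\ccun(I)$. Combined with $|\{\text{tents}\}\cap X| \le |X| \le k(I)$ this yields $\tents(I) < k(I) + \tfrac12\ccun(I)$, contradicting the hypothesis and completing the proof.
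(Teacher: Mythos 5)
Your proof is correct and follows essentially the same route as the paper's: the paper likewise fixes a hypothetical solution $X$, notes that each surviving tent connects three distinct components of $G[\Vun]$, and concludes $2|T \setminus X| < \ccun(I)$ before adding back $|X| \le k$. The auxiliary bipartite graph and the cycle-lifting argument you give are exactly the justification the paper leaves implicit in its one-line ``hence'' step, so your write-up is just a more detailed rendering of the same idea.
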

\begin{proof}
Assume $I = (G,\Vun,\Vdel,k)$ is a YES-instance and let $X$ be a solution.
Let $T \subseteq \Vdel$ be a set of tents in $I$.
For any $v \in T \setminus X$, $v$ connects three connected components
of $G[\Vun]$, hence $2|T \setminus X| < \ccun(I)$.
As $|X| \leq k$, $|T| < k + \frac{1}{2}\ccun(I)$ and the lemma follows.
\end{proof}
Consequently, we may apply the following rule.
\begin{reduction}\label{red:finish}
If $\pot(I) \leq \frac{1}{2}\ccun(I)$, conclude that $I$ is a NO-instance.
\end{reduction}
Note that Rule \ref{red:finish} triggers when $\pot(I) \leq 0$.

We now introduce a new reduction rule, promised in the introduction.
\begin{reduction}\label{red:new}
If $v$ is a $\Vdel$-leaf of $\Vun$-degree 2  with $w$ being its only
neighbour in $\Vdel$, subdivide the edge $vw$ and insert
the newly created vertex to $\Vun$.
\end{reduction}

First, we note that Rule \ref{red:new} is safe: introducing an undeletable
degree-2 vertex in a middle of an edge does not change the set of
feasible solutions to \disfvs{}. 
Second, note that Rule \ref{red:new} does not increase the measure of
the instance: although the newly added vertex creates a new connected
component in $G[\Vun]$, increasing $\ccun(I)$ by one, at the same time
$v$ becomes a tent and $\tents(I)$ increases by at least one
($w$ may become a tent as well). 
Third, one may be worried that
Rule \ref{red:new} in facts expands $G$, introducing a new vertex.
However,
it also increases the number of tents; as our algorithm never inserts a vertex
to $\Vdel$, Rule \ref{red:new} may be applied at most $|\Vdel|$ times.

\subsection{Polynomial-time solvable case}

Before we move to the branching rule, let us recall the polynomial-time
solvable case of Cao et al.~\cite{fvs:3.83k}.

\begin{theorem}[\cite{fvs:3.83k}]\label{thm:poly-case}
There exists a polynomial-time algorithm that
solves a special case of \disfvs{} where each vertex of $\Vdel$
is a tent.
\end{theorem}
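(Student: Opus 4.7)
The plan is to reduce the problem to matroid parity on a graphic matroid, which is solvable in polynomial time by Lov\'asz's classical theorem on linear matroid parity.

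As a preprocessing step I would apply Reduction Rule~\ref{red:two-nei}: any tent with two neighbours in a common component of $G[\Vun]$ must belong to every solution, so we may delete it and decrease $k$. After this, every remaining tent has three neighbours lying in three distinct components of $G[\Vun]$. I would then contract each connected component of $G[\Vun]$ to a single vertex, obtaining a bipartite multigraph $H$ on parts $\Vdel$ and $C$ (the set of super-components), where every tent has three distinct neighbours in $C$. Since $G[\Vun]$ is a forest, for any $X \sub \Vdel$ the graph $G \sm X$ is a forest iff the corresponding subgraph of $H$ is a forest, so it suffices to find a maximum $Y \sub \Vdel$ such that $H^Y$, the subgraph of $H$ induced by $Y \cup C$, is a forest.

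The crux will be a smoothing lemma. For each tent $v$ with neighbours $c_1,c_2,c_3 \in C$, fix an arbitrary ``middle'' $m(v) \in \{c_1,c_2,c_3\}$ and associate to $v$ the pair $P_v$ of the two edges joining $m(v)$ with the other two neighbours. Let $M_Y$ be the multigraph on $C$ whose edge multiset is $\bigcup_{v \in Y} P_v$. A brief Euler-characteristic computation shows that $H^Y$ and $M_Y$ share the same cyclomatic number $2|Y|-|C|+c$: the edge and vertex counts differ by matching amounts, and the number $c$ of connected components agrees because smoothing $v$ keeps all three of its neighbours in a common component. Therefore $H^Y$ is a forest iff $M_Y$ is, and the choice of middles is immaterial.

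With the smoothing lemma in hand, the task becomes: find the maximum $|Y|$ such that $\bigcup_{v \in Y} P_v$ is independent (i.e., forms a forest) in the graphic matroid on $C$. Since each $P_v$ is a pair of edges, this is exactly matroid parity on a graphic (hence linear) matroid, which admits a polynomial-time algorithm by Lov\'asz's theorem; recovering $X := \Vdel \sm Y$ yields the desired solution. I expect the main technical obstacle to be the smoothing lemma, in particular the verification that the number of connected components restricted to the $C$-side agrees in $H^Y$ and $M_Y$ regardless of how the middles are chosen; the remainder is an invocation of a classical result. It is worth noting that one can check by small examples that the family $\{Y : H^Y\text{ is a forest}\}$ is \emph{not} a matroid on $\Vdel$, so a greedy approach on $\Vdel$ alone will not suffice and the detour through matroid parity on $C$ appears genuinely necessary.
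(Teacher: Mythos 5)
Your proposal is correct and follows essentially the same route as the paper: both reduce to matroid parity in the graphic matroid of the multigraph obtained by contracting $G[\Vun]$ (component-wise) together with one chosen edge per tent --- your ``middle'' $m(v)$ plays exactly the role of the paper's distinguished edge $e_v^0$, and your pair $P_v$ is the paper's pair $\{e_v^1,e_v^2\}$. The only real difference is cosmetic: you verify the key equivalence (``$H^Y$ is a forest iff $M_Y$ is'') by a cyclomatic-number count, whereas the paper uses the fact that contracting an independent set $S$ preserves independence of supersets plus a degree-one observation; both arguments are sound.
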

Strictly speaking, in~\cite{fvs:3.83k} a seemingly more general case
is considered where each vertex of $\Vdel$ is of degree exactly three in $G$.
However, it is easy to see that an exhaustive application of Rule \ref{red:new}
to such an instance results in an instance where each vertex of $\Vdel$ is
a tent.

Theorem \ref{thm:poly-case} allows us to state the last reduction rule.
\begin{reduction}\label{red:poly-case}
If each vertex of $\Vdel$ is a tent, resolve the instance in polynomial time
using the algorithm of Theorem \ref{thm:poly-case}.
\end{reduction}

Theorem \ref{thm:poly-case} is proven in~\cite{fvs:3.83k}
by a reduction to the matroid parity problem in a cographic matroid.
We present here a shorter proof, relying on the matroid parity problem
in a graphic matroid of some contraction of $G$.

Given a (multi)graph $H$, the {\em{graphic matroid}} of $H$
is a matroid with ground set $E(H)$ where a set $A \subseteq E(H)$
is independent iff $A$ is acyclic in $H$ (spans a forest)%
\footnote{For basic definitions and results on matroids and graphic matroids, 
  we refer to the monograph~\cite{oxley}.}.

The {\em{matroid parity problem}} on the graphic matroid $H$
is defined as follows. In the input, apart from the (multi)graph $H$
with even number of edges, we are given a partition of $E(H)$
into pairs; in other words, $|E(H)| = 2m$ and
$E(H) = \{e_1^1, e_1^2, e_2^1,e_2^2, \ldots,e_m^1,e_m^2\}$.
We seek for a maximum set $J \subseteq \{1,2,\ldots,m\}$
such that $A(J) := \bigcup_{j \in J} \{e_j^1,e_j^2\}$ is independent,
i.e., is acyclic in $H$.
The matroid parity problem in graphic matroids is polynomial-time
solvable~\cite{DBLP:conf/icalp/GabowS85}.

Having introduced the matroid parity problem in graphic matroids, we
are ready to present a shorter proof of Theorem \ref{thm:poly-case}.

\begin{proof}[Proof of Theorem \ref{thm:poly-case}]
Let $I = (G,\Vdel,\Vun,k)$ be a \disfvs{} instance where
each vertex of $\Vdel$ is a tent.
For each $v \in \Vdel$ we arbitrarily enumerate the edges incident
to $v$ as $e_v^0, e_v^1, e_v^2$. 
Define $S = E(G[\Vun]) \cup \{e_v^0 : v \in \Vdel\}$
and let $H$ be the multigraph obtained from $G$ by contracting
all edges of $S$ (recall that we do not suppress the multiple edges and loops
in the process of contraction).
Clearly, $E(H) = \{e_v^1, e_v^2: v \in \Vdel\}$.
Treat $H$, with pairs $\{e_v^1,e_v^2\}_{v \in \Vdel}$ as an input
to the matroid parity problem in the graphic matroid of $H$.

Let $J \subseteq \Vdel$. We claim that $A(J) = \bigcup_{v \in J} \{e_v^1,e_v^2\}$
is independent in the graphic matroid in $H$ iff $G \setminus (\Vdel \setminus J)$
is a forest. Note this claim finishes the proof of Theorem \ref{thm:poly-case}
as it implies that a maximum solution $J$ to the matroid parity problem
corresponds to a minimum solution to \disfvs{} on $I$.

First note that, by the definition of $H$,
$A(J)$ is acyclic in $H$ iff $S \cup A(J)$ is acyclic in $G$.
Hence, if $A(J)$ is acyclic in $H$, $G \setminus (\Vdel \setminus J)$
is a forest, as $E(G \setminus (\Vdel \setminus J)) \subseteq S \cup A(J)$.
In the other direction, assume that $G \setminus (\Vdel \setminus J)$ is acyclic.
Note that
$$S \cup A(J) = E(G \setminus (\Vdel \setminus J)) \uplus \{e_v^0: v \in \Vdel \setminus J\}.$$
However, each vertex $v \in \Vdel \setminus J$ has only one incident edge that belongs to $S \cup A(J)$. Hence, $S \cup A(J)$ is acyclic
and the claim is proven.
\end{proof}

\subsection{Branching rule}\label{ss:branch}

We conclude with a final branching rule.

\begin{branching}\label{branching}
Pick a vertex $v \in \Vdel$ that is not a tent and has a maximum possible number
of neighbours in $\Vun$. Branch on $v$: either delete $v$ and decrease $k$ by one
or move $v$ to $\Vun$.
\end{branching}

First, note that the branching of Rule \ref{branching} is exhaustive:
in first branch, we consider the case when $v$ is included in the
solution to the instance $I$ we seek for, and in the second branch --- when the
solution does not contain $v$.

In the branch where $v$ is removed, the number of tents does not decrease
and the number of connected components of $G[\Vun]$ remains the same.
Hence, the measure $\pot(I)$ drops by at least one.

Let us now consider the second branch.
As Rule \ref{red:poly-case} is not applicable, there
exists a connected component of $G[\Vdel]$ that is not a tent,
and there exists a vertex $u$ in this component whose degree in $G[\Vdel]$
is at most one. As Rules \ref{red:01}, \ref{red:deg2} and \ref{red:new}
are not applicable, $u$ has at least three neighbours in $\Vun$.
As $u$ is not a tent, Rule \ref{branching} may choose $u$
and, consequently, it triggers on a vertex $v$ with
at least $3$ neighbours in $\Vun$.
As Rule \ref{red:two-nei} is not applicable, in the branch
where $v$ is moved to $\Vun$, $G[\Vun]$ remains a forest
and the number of its connected components, $\ccun(I)$, drops by at least two. 
Hence, as $\tents(I)$ does not decrease and $k(I)$ remains unchanged in this branch,
the measure $\pot(I)$ drops by at least two.

By Lemma \ref{lem:red-measure}, no reduction rule may increase
the measure $\pot(I)$. Rule \ref{red:finish} terminates computation if $\pot(I)$
is not positive.
The branching rule, Rule \ref{branching}, yields drops of measure $1$ and $2$ in
the two considered subcases. We conclude that the algorithm
resolves an instance $I$ in $\Ohstar(\zlota^{\pot(I)})$ time and polynomial space,
concluding the proof of Theorem \ref{thm:dismain}.

\section{The $\Ohstar(3.592^k)$-time algorithm}
In the improved algorithm we still follow the iterative compression principle
and work with \disfvs{} problem. 

Let $I=(G,D,U,k)$ be an instance of \disfvs{}. For technical reasons we 
no longer require that $G[U]$ is forest, despite the fact that whenever
$G[U]$ contains a cycle, $I$ is clearly a NO-instance. Let
us define $\ell'(I) = |U|-|E(G[U])|$. Note that $\ell'(I)$ generalizes $\ell(I)$ 
defined for instance with acyclic $G[U]$ as $\ell'(I)$ equals then the number of connected components
of $G[U]$.

For a real constant $\alpha\in \left[\tfrac{1}{2},1\right]$ we define
$\mu_\alpha(I) = k(I)+\alpha \ell'(I)-t(I)$. Recall that the previous algorithm used $\alpha=1$
while in \cite{fvs:3.83k} this parameter is set to $\frac{1}{2}$.

\subsection{Reduction rules}
We use reduction rules similar to those defined Section~\ref{ss:redPhi}.
Since we use a more general measure and some of the rules differ, we explicitly
state all the rules we use and briefly discuss their correctness.

\begin{reduction2}\label{r:zero}
Remove all vertices $v\in \Vdel$ of degree at most 1.
\end{reduction2}
\begin{reduction2}\label{r:one}
If there exists a vertex $v\in \Vdel$ of degree 2, move it to $U$ if it has a neighbour in $U$, and
contract one of its incident edges otherwise.
\end{reduction2}
\begin{reduction2}\label{r:meas}
If $\mu_\alpha(I)\le (\alpha-\frac{1}{2})\ell'(I)$, conclude that $I$ is a
NO-instance.
\end{reduction2}
\begin{reduction2}\label{r:two}
If $v$ is a $\Vdel$-leaf of $\Vun$-degree 2  with $w$ being its only
neighbour in $\Vdel$, subdivide the edge $vw$ and insert
the newly created vertex to $\Vun$.
\end{reduction2}
\begin{reduction2}\label{r:tent}
If each $v\in \Vdel$ is a tent, resolve the instance in polynomial time using
the algorithm of Theorem~\ref{thm:poly-case} if $G[U]$ is a forest, and return NO otherwise.
\end{reduction2}

\begin{lemma}
Rules~\ref{r:zero}--\ref{r:tent} are safe. Moreover, each of them either
immediately gives the answer or decreases $|\Vdel|$ simultaneously not
increasing the measure $\mu_\alpha$
\end{lemma}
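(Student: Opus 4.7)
The plan is a rule-by-rule verification matching the two halves of the statement: safeness, then the claim about $|\Vdel|$ and $\mu_\alpha$. Rules~\ref{r:meas} and~\ref{r:tent} are terminal, so for them only correctness of the output needs to be checked; the remaining three rules modify the instance, and for each I would verify safeness followed by the measure accounting.

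For the terminal rules, Rule~\ref{r:meas} should follow from a straightforward generalisation of Lemma~\ref{lem:pot-bound} to the relaxed $\ell'$ and to the parameter $\alpha$: the original proof used only the inequality $2(t-|X|) < \ccun$, which after multiplying $\ccun$ by $\alpha$ and absorbing the term $(\alpha-\tfrac12)\ell'$ into the right-hand side yields exactly the threshold in Rule~\ref{r:meas}. Rule~\ref{r:tent} is safe because a cycle in $G[\Vun]$ forces a NO-answer, while if $G[\Vun]$ is a forest then Theorem~\ref{thm:poly-case} applies verbatim.

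For the three remaining rules I would first collect the safeness arguments, all of which are standard. Rule~\ref{r:zero} is safe because a vertex of degree at most $1$ never lies on a cycle. Rule~\ref{r:one} is the classical dissolution of a degree-$2$ vertex: moving $v$ to $\Vun$ loses no solution because a solution that uses $v$ can be swapped for one that uses $v$'s unique $\Vdel$-neighbour, and contracting an edge between two $\Vdel$-vertices is safe because every cycle through $v$ must traverse both incident edges and therefore survives the contraction. Rule~\ref{r:two} is safe because subdividing an edge preserves the cycle space of the graph.

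The crux is the measure accounting. For Rule~\ref{r:zero} and Rule~\ref{r:one} the values of $k$ and $\ell'$ are easy: $k$ is untouched, and in the move-to-$\Vun$ subcase of Rule~\ref{r:one} the quantity $|\Vun|-|E(G[\Vun])|$ stays the same or decreases (since $v$ contributes one or two new $\Vun$-edges against a single new $\Vun$-vertex). The only subtle point is monotonicity of $t$; here I would argue that after each rule, the degrees and $\Vdel$-degrees of the surviving vertices are weakly monotone, so the set of tents can only grow. In the contraction subcase of Rule~\ref{r:one} I would invoke acyclicity of $G[\Vdel]$ to exclude a triangle on $\{u,v,w\}$, so that the contraction creates no multi-edges and the local analysis is clean. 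Rule~\ref{r:two} is the delicate case: here $|\Vdel|$ is not literally decreased, but $|\Vun|$ grows by one (hence $\ell'$ grows by one) while $v$ is instantly turned into a tent, giving a net change $\alpha\cdot 1-\Delta t \le \alpha-1\le 0$ in $\mu_\alpha$; together with the observation already used in Section~\ref{ss:redPhi} that no rule ever inserts a vertex into $\Vdel$, this is what lets one charge applications of Rule~\ref{r:two} against $|\Vdel|$ to obtain the termination guarantee claimed in the lemma.

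The main obstacle I expect is the careful bookkeeping of tents, because being a tent is a joint condition on $\Vdel$-isolation and on degree, and both quantities can be perturbed by the very local modifications performed by the rules. In particular, acyclicity of $G[\Vdel]$ has to be invoked explicitly whenever one contracts or subdivides an edge in order to rule out the creation of multi-edges, and in Rule~\ref{r:two} the tent gain at $v$ must be shown not to be cancelled by the possible simultaneous conversion of $w$, whose contribution only improves the measure further.
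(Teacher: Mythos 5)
Your overall route is the same as the paper's --- a rule-by-rule check of safeness followed by the measure bookkeeping --- but you are more self-contained where the paper is terse: the paper dismisses safeness of Rules~\ref{r:zero}, \ref{r:one}, \ref{r:two} and~\ref{r:tent} with a one-line reference to their counterparts in Section~\ref{ss:redPhi} and only argues Rule~\ref{r:meas} directly (via Lemma~\ref{lem:pot-bound}, exactly as you propose), whereas you re-derive the safeness arguments from scratch. You are also more careful than the paper about the second half of the statement: the paper's proof never addresses the claim that each non-terminal rule ``decreases $|\Vdel|$'', which is literally false for Rule~\ref{r:two}; your charging argument (no rule ever inserts a vertex into $\Vdel$, so Rule~\ref{r:two} fires at most $|\Vdel|$ times) is the right repair and matches the remark made for the first algorithm. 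The measure accounting for the three modifying rules is correct, though your blanket justification for the monotonicity of $t$ (``degrees are weakly monotone, so the set of tents can only grow'') is not a valid principle on its own --- a tent must have degree exactly $3$, so a degree drop could in principle destroy one; the correct reason tents survive is that a tent is $\Vdel$-isolated, so none of the local modifications, all of which happen at a $\Vdel$-vertex and its incident edges, can change a tent's degree or $\Vdel$-degree.

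The one genuine gap is in your safeness argument for the move-to-$\Vun$ branch of Rule~\ref{r:one}. Your swap ``to $v$'s unique $\Vdel$-neighbour'' only applies when $v$ has exactly one neighbour in $\Vun$; the rule also fires when both neighbours of $v$ lie in $\Vun$, and then there is no $\Vdel$-neighbour to swap to. If the two $\Vun$-neighbours lie in different components of $G[\Vun]$ the argument can be saved: for a solution $X \ni v$, every cycle of $G \setminus (X \setminus \{v\})$ passes through $v$ and hence consists of $v$ together with the unique path between its neighbours in the forest $G \setminus X$; that path must leave $\Vun$ and therefore contains a vertex of $\Vdel \setminus X$ to swap to. If the two neighbours lie in the same component of $G[\Vun]$, however, $v$ is forced into every solution, and moving it to $\Vun$ turns a possibly YES-instance into one where $G[U]$ contains a cycle, i.e.\ a NO-instance. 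In the first algorithm this situation cannot arise because Rule~\ref{red:two-nei} has higher priority than the degree-$2$ rule; the second rule set has no such guard, and the paper's own proof inherits the same hole by citing Section~\ref{ss:redPhi}. So this subcase must be addressed explicitly --- either by restoring the analogue of Rule~\ref{red:two-nei} or by arguing it cannot occur --- and your proof, like the paper's, currently does not do so.
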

\begin{proof}
All rules except Rule~\ref{r:meas} are safe since their counterparts in
Section~\ref{ss:redPhi} were safe.
To prove the safeness of Rule~\ref{r:meas} observe that if $G[U]$ is a forest,
then the instance is trivially a NO-instance and otherwise $\ell(I)$ coincides
with $\ell'(I)$ so Lemma~\ref{lem:pot-bound} immediately shows that $I$ is a
NO-instance if the rule is applicable.

Now, let us analyze the change in measure.
Rule~\ref{r:zero} leaves $k$ and $\ell'$ unchanged while $t$
may only increase since the neighbour of the vertex removed from $\Vdel$ might
become a tent. 
Rule~\ref{r:one} leaves $k$ unchanged, and depending on the case, either 
does not modify $G[\Vun]$ or introduces one vertex and one or two edges to
$G[U]$.
Consequently, $\ell'$ does not increase. The rule may create a tent,
but does not remove any, so $t$ does not decrease.
Rule~\ref{r:two} introduces one vertex to $G[\Vun]$, so it
increases $\ell'$ by one. Simultaneously, it leaves $k$ unchanged and introduces
at least one tent. In total, this gives a drop of at least $(1-\alpha)$ in the
measure.
\end{proof}

We apply the reduction rules in any order until we obtain an \emph{irreducible}
instance, i.e. no rule is applicable.
Let us gather some properties of such instances.

\begin{lemma}\label{lem:prop}
Assume that $I=(G,\Vun, \Vdel, k)$ is an irreducible instance. Then
\begin{enumerate}[(a)]
    \item\label{pc} each $v\in \Vdel$ of $\Vun$-degree 0 has $\Vdel$-degree at
  least 3,
  \item\label{pa} $\mu_\alpha(I) > 0$,
  \item\label{pd} there is at least one $v\in \Vdel$ which is not a tent,
  \item\label{pb} each $\Vdel$-leaf and $\Vdel$-isolated vertex has
  $\Vun$-degree 3 or more.
\end{enumerate}
\end{lemma}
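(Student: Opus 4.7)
The plan is to derive each of the four properties from the non-applicability of the specific reduction rule(s) that it corresponds to; the whole proof then reduces to four short structural verifications.

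For~\ref{pc}, a vertex $v \in \Vdel$ with $\Vun$-degree $0$ satisfies $\deg_G(v) = \deg_\Vdel(v)$; Rule~\ref{r:zero} excludes $\deg_G(v) \le 1$ and Rule~\ref{r:one} excludes $\deg_G(v) = 2$, so $\deg_\Vdel(v) \ge 3$. Part~\ref{pb} is handled by a symmetric degree chase: a $\Vdel$-isolated vertex has $\deg_G(v) = \deg_\Vun(v)$ and the same two rules force $\deg_\Vun(v) \ge 3$; for a $\Vdel$-leaf $v$ we have $\deg_G(v) = 1 + \deg_\Vun(v)$, and Rules~\ref{r:zero}, \ref{r:one}, and~\ref{r:two} successively exclude $\deg_\Vun(v) \in \{0,1,2\}$, leaving $\deg_\Vun(v) \ge 3$.

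Part~\ref{pd} will be immediate: if no such $v$ existed, then every vertex of $\Vdel$ would be a tent and Rule~\ref{r:tent} would fire. For part~\ref{pa}, non-applicability of Rule~\ref{r:meas} gives the strict inequality $\mu_\alpha(I) > (\alpha - \frac{1}{2})\ell'(I)$; combining this with $\alpha \ge \frac{1}{2}$ and with $\ell'(I) \ge 0$ (which holds for instances reached by the algorithm, since $G[\Vun]$ is initially a forest so that $\ell'(I)$ equals the number of connected components of $G[\Vun]$, and an inspection of Rules~\ref{r:zero}--\ref{r:two} through their effect on $|\Vun|$ and $|E(G[\Vun])|$ shows that this property is preserved), the right-hand side is nonnegative, hence $\mu_\alpha(I) > 0$.

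The only step that is not purely mechanical is the auxiliary claim $\ell'(I) \ge 0$ used in~\ref{pa}; the remaining three parts are pure bookkeeping, each matching a structural conclusion against the forbidden configuration that a single reduction rule is designed to exclude.
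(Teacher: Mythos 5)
Your proof takes essentially the same approach as the paper's: each property is read off from the inapplicability of the corresponding reduction rule(s), and parts~(\ref{pc}), (\ref{pd}) and~(\ref{pb}) match the paper's argument exactly. For part~(\ref{pa}) you are in fact more careful than the paper, which treats $\mu_\alpha(I)>0$ as an immediate consequence of Rule~\ref{r:meas} alone; you correctly observe that for $\alpha>\tfrac{1}{2}$ one additionally needs $\ell'(I)\ge 0$ to pass from $\mu_\alpha(I)>(\alpha-\tfrac{1}{2})\ell'(I)$ to $\mu_\alpha(I)>0$. Be aware, however, that your parenthetical justification of this auxiliary claim is not a rule-by-rule invariant as stated: Rule~\ref{r:one}, applied to a degree-2 vertex with both neighbours in $\Vun$, adds one vertex and two edges to $G[\Vun]$ and hence decreases $\ell'$ by one (the paper's own analysis only asserts that $\ell'$ ``does not increase''), and the \emph{fix} branches of the branching rules decrease $\ell'$ further whenever the fixed vertex has $\Vun$-degree at least two; so a purely local inspection does not establish $\ell'(I)\ge 0$, and a more global argument would be needed to make this point fully rigorous --- a point the paper itself elides entirely.
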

\begin{proof}
Property~(\ref{pc}) is an immediate consequence of
inapplicability of Rules~\ref{r:zero} and \ref{r:one}, property~(\ref{pa}) of Rule~\ref{r:meas}
and property~(\ref{pd}) of Rule~\ref{r:tent}. For a proof of~(\ref{pb})
observe that $\Vdel$-isolated vertices of degree 0 and 1 are eliminated by
Rule~\ref{r:zero}, while of degree 2 by Rule~\ref{r:one}. Similarly,
$\Vdel$-leaves of $\Vun$ degree 0,1 and 2 are dismissed by Rules~\ref{r:zero},
\ref{r:one} and~\ref{r:two} respectively.
\end{proof}

\subsection{Branching rules}
If no reduction rules is applicable, the algorithm performs one of the branching
rules. The branching rules are built from several elementary
operations. These operations include performing Rules~\ref{r:zero},
\ref{r:one} or~\ref{r:two} as well as the following branching rule:
\begin{branching2}[Branching on $v$]\label{b:elem}
Let $v\in \Vdel$. Either delete $v$
and decrease $k$ by one (\emph{delete} branch) or move $v$ to $\Vun$ (\emph{fix}
branch).
\end{branching2}
Similarly to
the branching rule of the previous algorithm, this rule is clearly exhaustive.
Observe that if $v$ is not a tent, then in the \emph{delete} branch the measure
decreases by at least one (since $k$ decreases) and in the \emph{fix} branch, the
measure decreases by at least $\alpha(f-1)$, where $f$ is the $\Vun$-degree of
$v$.
This is because one vertex and $f$ edges are moved to $G[\Vun]$. 
Also, if $v$ has a parent, its $\Vun$-degree remains unchanged in the
\emph{delete} branch and raises by one in the \emph{fix} branch.

With all building blocks ready, let us proceed with the description of the
structure steering execution of the algorithm.
For each connected component $T$ of $V[\Vdel]$ we select one of the vertices
of $T$ as the \emph{root} of $T$. We require the root to be either $\Vdel$-isolated
or a $\Vdel$-leaf. The roots are selected every time we need to choose a branching rule,
each such selection can be performed independently.
With a root in each component, we can define the parent-child relation on
$\Vdel$, which we use to partition vertices of $\Vdel$ into four types:

\begin{enumerate}[(a)]
  \item \emph{tents}
  \item other vertices of $\Vun$-degree 3 with no children, called \emph{singles},
  \item vertices of $\Vun$-degree 0 with two children, both singles, called \emph{doubles},
  \item the remaining vertices, called \emph{standard} vertices.
\end{enumerate}
 
We call a standard vertex a \emph{guide} if none of its children is standard.
Each guide has three parameters: the $\Vun$-degree $f$, the number of children being singles $s$, and
the number of children being doubles $d$. We call the triple $(f,s,d)$ 
the \emph{type} of a guide. In all branching rules we pick an arbitrary guide
$v$ and perform operations on the vertices in a subtree rooted on $v$. The choice of
the branching rule depends on the type of the guide.
The following lemma justifies such an approach.
\begin{lemma}
If $I=(G,\Vun, \Vdel, k)$ is irreducible, $\Vdel$ contains a guide.
\end{lemma}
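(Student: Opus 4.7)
The plan is to use Lemma~\ref{lem:prop}(\ref{pd}) to obtain a non-tent vertex $v \in \Vdel$, work inside the connected component $T$ of $G[\Vdel]$ containing $v$, first exhibit a single standard vertex in $T$, and then descend through the parent--child tree of $T$ to a guide. The descent is trivial: if the current standard vertex has no standard child it is already a guide, otherwise step to any standard child; since $T$ is a finite rooted tree, this process must terminate at a guide.

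The main work is producing that first standard vertex, which I would split according to $|T|$. In the degenerate case $|T|=1$ the vertex $v$ is $\Vdel$-isolated, so Lemma~\ref{lem:prop}(\ref{pb}) forces $\deg_\Vun(v) \geq 3$, and the hypothesis that $v$ is not a tent strengthens this to $\deg_\Vun(v) \geq 4$; this rules out $v$ being a tent, a single (singles have $\Vun$-degree exactly $3$) or a double (doubles have two children, while $v$ has none), so $v$ is already standard, and with no children it is itself a guide. In the main case $|T|\ge 2$ I would exploit the rooting freedom by choosing the root $r$ of $T$ to be a $\Vdel$-leaf; then $r$ has exactly one child and no parent, which forbids each of the three non-standard types (tent $=$ $\Vdel$-isolated, single $=$ no children, double $=$ two children), so $r$ is standard.

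I do not anticipate any serious obstacle. The only point requiring care is that the predicate \emph{standard} is defined residually, as ``neither tent, single, nor double'', so in each case one must verify that the candidate vertex fails all three of the non-standard definitions; both verifications are immediate from the numerical constraints built into those definitions. The overall argument uses nothing more than the structural properties of irreducible instances gathered in Lemma~\ref{lem:prop} and an unfolding of the four type definitions, together with the rooting convention which, crucially, allows the root of a non-trivial component to be chosen as a $\Vdel$-leaf.
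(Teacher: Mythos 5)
Your proof is correct and follows essentially the same route as the paper: both arguments take a non-tent vertex guaranteed by Lemma~\ref{lem:prop}(\ref{pd}), show that the root of its component of $G[\Vdel]$ must be standard (using the constraint that roots are $\Vdel$-isolated or $\Vdel$-leaves together with Lemma~\ref{lem:prop}(\ref{pb})), and then pass from a standard vertex to a guide. Your split into the cases $|T|=1$ and $|T|\ge 2$ and the explicit downward descent to a deepest standard vertex are only presentational variations on the paper's argument.
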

\begin{proof}
Clearly, it suffices to prove that there exists a standard vertex in $\Vdel$.
Observe that if $v$ is standard, so is its parent (if any). 
Consequently, we shall prove that a root of some component of $G[\Vdel]$ is a
standard vertex.
By Lemma~\ref{lem:prop}(\ref{pd}) there is a vertex $v\in \Vdel$, which is not a
tent. Let $T$ be its connected component of $G[\Vdel]$ and $r$ be a root of $T$.
 Note that, since roots were chosen to have $\Vdel$-degree at most 1, by
 Lemma~\ref{lem:prop}(\ref{pd}) $r$ has $\Vun$-degree at least 3. 
 Thus, $r$ is not a double. If $r$ were a single, it would be $\Vdel$-isolated,
 then, however, it would either be a tent or its
 $\Vun$-degree would be at least 4. The former case is impossible by the choice
 of $T$ while in the latter $r$ is standard as desired.
 Therefore $r$ is standard.
\end{proof}
 
Finally, observe that some triples of non-negative integers cannot be types of a
guide. Indeed, types $(0,0,1)$ and $(0,1,0)$ contradict
Lemma~\ref{lem:prop}(\ref{pc}), while types $(0,0,0)$, $(1,0,0)$ and $(2,0,0)$
Lemma~\ref{lem:prop}(\ref{pb}).
Moreover, types $(3,0,0)$ and $(0,2,0)$ are forbidden, since a guide of this
type would actually be a single, tent or a double.
Also note that, since any root has $\Vun$-degree at least 3, a guide with $f\le 2$ always has a parent.

Let us start with a pair of rules, which are used as subroutines in other rules. All branching rules are indexed by guide types. 
We use $\ge n$ as `at least $n$'. If a guide matches several rules, any of them
can be applied. In all the descriptions the guide is called $v$. 
\begin{branching*}[$\ge\!4,0,0$] Branch on $v$ (i.e. perform Rule \ref{b:elem}).
\end{branching*}
\begin{branching*}[$1, 1, 0$] Branch on $w$, the only child of $v$.
In the \emph{delete} branch, use Rule~\ref{r:one} to move
$v$ to $\Vun$. In the \emph{fix} branch, use Rule~\ref{r:two} to make $v$ a
tent. In both branches, the $\Vun$-degree of a parent of $v$ raises by one.
\end{branching*}

Before we list the remaining rules, let us describe a subroutine which is
not used as a branching rule on its own.

\begin{subroutine*}[Eliminate a double]
Let $v$ be a guide of type $(f,s,d)$ and let its child $u$ be a double with
children $w_1$, $w_2$. 
Branch on $w_1$. In the \emph{delete} branch, use Rule~\ref{r:one} to dissolve
$u$.
Then, $v$ becomes an $(f,s+1,d-1)$-guide.
In the \emph{fix} branch, $u$ becomes a $(1,1,0)$-guide, proceed with the
$(1,1,0)$ rule.  In both branches of this rule, $v$ becomes an
$(f+1,s,d-1)$-guide.
\end{subroutine*}

Below, we give branching rules for all types of guides not considered yet.
Most of these rules use the same scheme, but for sake of clarity and
consistence with complexity analysis, we state all rules explicitly.

\begin{branching*}[$\ge2,\ge1,\ge0$] Let a single $w$ be a child of
$v$. Branch on $v$. In the \emph{delete} branch $w$ becomes a tent, in the \emph{fix}
branch proceed as in the $(\ge 4, 0, 0)$ rule for $w$.
\end{branching*}
\begin{branching*}[$\ge2,\ge 0,\ge1$] Let a double $w$ be a child of $v$.
Use the subroutine to eliminate $w$. In the \emph{delete} branch proceed as in
the $(\ge 2, \ge 1, \ge 0)$ rule. In the remaining branches,
do not do anything more.
\end{branching*}
\begin{branching*}[$1, 0, 1$] Let a double $w$ be the only child of $v$.
Use the subroutine to eliminate $w$. In the \emph{delete} branch
proceed as in the $(1,1,0)$-rule, in the remaining branches observe that $v$
becomes a $\Vdel$-leaf of $\Vun$-degree 2, and apply Rule~\ref{r:two} to
eliminate it.
\end{branching*}
\begin{branching*}[$1, \ge 2, \ge 0$]
Let singles $w_1,w_2$ be children of $v$. Branch on $v$.
In the \emph{delete} branch $w_1$ and $w_2$ become tents. In the `remove' branch,
eliminate both $w_1$ and $w_2$ with the $(\ge 4, 0,0)$ rule.
\end{branching*}
\begin{branching*}[$1, \ge 1, \ge 1$]
Let a double $w$ be a child of $v$. Use the subroutine to eliminate $w$. In the
\emph{delete} branch proceed as in the $(1, \ge 2, \ge 0)$ rule, in the remaining
branches proceed as in the $(\ge 2, \ge 1, \ge 0)$ rule.
\end{branching*}
\begin{branching*}[$1, \ge 0, \ge 2$]
Let a double $w$ be a child of $v$. Use the subroutine to eliminate $w$. In the
\emph{delete} branch proceed as in the $(1, \ge 1, \ge 1)$ rule, in the remaining
branches proceed as in the $(\ge 2, \ge 0, \ge 1)$ rule.
\end{branching*}
\begin{branching*}[$0, 1, 1$]
Let a double $w$ be a child of $v$. Use the subroutine to eliminate $w$. In the
\emph{delete} branch do not do anything more, in the remaining branches proceed as
in the $(1,1,0)$ rule.
\end{branching*}
\begin{branching*}[$0, 0, 2$]
Let a double $w$ be a child of $v$. Use the subroutine to eliminate $w$. In the
\emph{delete} branch proceed as in the $(0,1,1)$ rule, in the remaining branches
proceed as in the $(1,0,1)$ rule.
\end{branching*}
\begin{branching*}[$0, \ge 3, \ge 0$]
Let a single $w$ be a child of $v$. Branch on $w$. In the \emph{delete} branch
do not do anything more, in the \emph{fix} branch proceed as in $(1, \ge 2, \ge
0)$ rule.
\end{branching*}
\begin{branching*}[$0, \ge 2, \ge 1$]
Let a double $w$ be a child of $v$. Use the subroutine to eliminate $w$. In the
\emph{delete} branch proceed as in the $(0, \ge 3, \ge 0)$ rule, in the remaining
branches proceed as in the $(1, \ge 2, \ge 0)$ rule.
\end{branching*}
\begin{branching*}[$0, \ge 1, \ge 2$]
Let a double $w$ be a child of $v$. Use the subroutine to eliminate $w$. In the
\emph{delete} branch proceed as in the $(0, \ge 2, \ge 1)$ rule, in the remaining
branches proceed as in the $(1, \ge 1, \ge 1)$ rule.
\end{branching*}
\begin{branching*}[$0, \ge 0, \ge 3$]
Let a double $w$ be a child of $v$. Use the subroutine to eliminate $w$. In the
\emph{delete} branch proceed as in the $(0, \ge 1, \ge 2)$ rule, in the remaining
branches proceed as in the $(1, \ge 0, \ge 2)$ rule.
\end{branching*}

It is easy to see that for each guide some reduction rule is applicable.
Moreover, the branch tree for each rules is of constant size. The rules are
built of elementary operations which decrease $\Vdel$, so we obtain
a correct algorithm solving the \disfvs{} problem.

\subsection{Complexity analysis}
Due to numerous branching rules, some of them rather complicated, a manual
complexity analysis would be tedious and error-prone.
Therefore in the Appendix we provide a Python script, which automates the
analysis.

Our script computes the \emph{branching vectors}\footnote{We refer to
~\cite{fomin2010exact} for more on branching vectors.}.
It relies on a description of all branching rules, strictly following
their definitions presented above. Computing the branching vectors, 
we use the following facts, all proved above. 
\begin{enumerate}[(a)]
  \item Rules~\ref{r:zero} and~\ref{r:one} do not increase the measure.
  \item Rule~\ref{r:two} decreases the measure by at least $1-\alpha$.
  \item Obtaining a tent corresponds to drop in measure equal to $1$.
  \item Elementary branch on a vertex $v$ with $\Vun$-degree $f$ has measure
  drop at least $1$ in the \emph{delete} branch and $(f-1)\alpha$ in the \emph{fix}
  branch.
\end{enumerate}

Having computed the branching vectors, our script determines
$\beta_\alpha$: the maximum positive root of the corresponding equations over
all vectors.
Standard reasoning for branching algorithms lets us conclude that our algorithm
for  \disfvs{} works in $\Ohstar(\beta_\alpha^{\mu_\alpha(I)})$ time.

The instances of \disfvs{} arising from iterative compression of \fvsshort{} have
$|\Vun|=k+1$, so $\ell' \le k+1$.
Thus, for such an instance $I$ we have $\mu_\alpha(I)=
k(I)+\alpha\ell'(I)-t(I)\le k(1+\alpha)+1$.
Consequently, our algorithm solves them in
$\Ohstar\left(\left(\beta_\alpha\right)^{(1+\alpha)k}\right)$ time and
polynomial space.
For $\alpha=0.84$ this becomes $\Ohstar(2.592^k)$.
Repeating the reasoning from the proof of Theorem~\ref{thm:main}, we 
complete the proof of Theorem~\ref{thm:main-fast}.


\section{Concluding remarks}

In our paper we presented a two deterministic FPT algorithms for \fvsname{}.
The former can be seen as a reinterpretation and simplification of the previously
fastest algorithm of Cao et al.~\cite{fvs:3.83k}. The latter algorithm
performs branches more carefully and consequently its running time is slightly
better. 

However, we are still far from matching the time complexity of the best
{\em{randomized}} algorithm, using the Cut\&Count technique \cite{fvs:3k}.
In particular, in our work we do not use any insights both from the
Cut\&Count technique and its later derandomizations \cite{cc:derand1,cc:derand2}.
Obtaining a {\em{deterministic}} algorithm for \fvsshort{}
running in $\Ohstar(3^k)$ time remains a challenging open problem.

We would also like to note that we are not aware of any lower bounds
for FPT algorithms of \fvsshort{} that are stronger than
a refutation of a subexponential algorithm, based on the Exponential
Time Hypothesis, that follows directly from a similar result for \textsc{Vertex Cover}%
\footnote{Replace each edge with a short cycle to obtain a reduction from \textsc{Vertex Cover}
to \fvsname{}.}.
Can we show some limits for FPT algorithms for \fvsshort{}, assuming
the Strong Exponential Time Hypothesis, as it was done for e.g.
\textsc{Steiner Tree} or \textsc{Connected Vertex Cover}~\cite{ccc}?

\bibliographystyle{abbrv}
\bibliography{fvs-golden}

\newpage
\appendix
\section*{Python script automating complexity analysis\footnote{
Also available at~\url{students.mimuw.edu.pl/~kociumaka/fvs}}}

\VerbatimInput{fvs.py}

\end{document}